\newcommand{\keywords}[1]{\par\addvspace\baselineskip
\noindent\keywordname\enspace\ignorespaces#1}
\newcommand{\beeq}[1]{\begin{equation} \label{#1}}
\newcommand{\eeq}{\end{equation}}
\newcommand{\beeqs}{\begin{eqnarray*}}
\newcommand{\eeqs}{\end{eqnarray*}}
\renewcommand{\(}{\begin{eqnarray*}}
\renewcommand{\)}{\end{eqnarray*}}
\newcommand{\beeqn}{\begin{eqnarray}}
\newcommand{\eeqn}{\end{eqnarray}}
\newcommand{\eqand}{\mbox{~~~and~~~}}
\newcommand{\eqwhere}{\mbox{~~~where~~~}}
\newcommand{\refth}[1]{Theorem~\ref{#1}}
\newcommand{\reflm}[1]{Lemma~\ref{#1}}
\newcommand{\refsec}[1]{Section~\ref{#1}}
\newcommand{\reffig}[1]{Fig.\ \ref{#1}}
\newcommand{\refeq}[1]{(\ref{#1})}
\renewcommand{\quad}{\hspace*{3mm}}
\renewcommand{\qquad}{\hspace*{5mm}}
\newcommand{\lp}{\left(  }
\newcommand{\rp}{\right) }
\newcommand{\lb}{\left\{  }
\newcommand{\rb}{\right\} }
\newcommand{\lbr}{\left[  }
\newcommand{\rbr}{\right] }
\newcommand{\lf}{\left\lfloor}
\newcommand{\rf}{\right\rfloor}
\newcommand{\lc}{\left\lceil}
\newcommand{\rc}{\right\rceil}
\newcommand{\Z}{{\mathbb Z}}
\newcommand{\N}{{\mathbb N}}
\newcommand{\R}{{\mathbb R}}
\newcommand{\ve}{\varepsilon}
\newcommand{\ep}{\epsilon}
\newcounter{cnt1}
\newcounter{cnt2}
\newcounter{cnt3}
\newcounter{cnt4}
\newcounter{cnt5}
\newcommand{\beenu}
{
\begin{list}{\arabic{cnt1}.}
{\usecounter{cnt1}
\leftmargin 4mm
\setlength{\leftmargin}{\leftmargin}
\topsep 0pt
\parsep 0pt
\itemsep 0pt}
}
\newcommand{\eenu}{\end{list}}
\newcommand{\beenub}
{
\begin{list}{\arabic{cnt1}-\arabic{cnt2}.}
{
\leftmargin  4mm
\setlength{\leftmargin}{\leftmargin}
\topsep 0pt
\parsep 0pt
\itemsep 0pt}
}
\newcommand{\eenub}{\end{list}}
\newcommand{\beenuc}
{
\begin{list}{\arabic{cnt1}-\arabic{cnt2}-\arabic{cnt3}.}
{
\leftmargin  4mm
\setlength{\leftmargin}{\leftmargin}
\topsep 0pt
\parsep 0pt
\itemsep 0pt}
}
\newcommand{\eenuc}{\end{list}}
\newcommand{\beenud}
{
\begin{list}{\arabic{cnt1}-\arabic{cnt2}-\arabic{cnt3}-\arabic{cnt4}.}
{
\leftmargin  4mm
\setlength{\leftmargin}{\leftmargin}
\topsep 0pt
\parsep 0pt
\itemsep 0pt}
}
\newcommand{\eenud}{\end{list}}
\newcommand{\beenue}
{
\begin{list}{\arabic{cnt1}-\arabic{cnt2}-\arabic{cnt3}-\arabic{cnt4}-\arabic{cnt5}.}
{
\leftmargin  4mm
\setlength{\leftmargin}{\leftmargin}
\topsep 0pt
\parsep 0pt
\itemsep 0pt}
}
\newcommand{\eenue}{\end{list}}
\newcommand{\broman}
{
\begin{list}{\roman{cnt1})}
{
\usecounter{cnt1}
\leftmargin 3mm
\setlength{\leftmargin}{\leftmargin}
\topsep 0pt
\parsep 0pt
\itemsep 0pt}
}
\newcommand{\eroman}{\end{list}}
\newcommand{\bRoman}
{
\begin{list}{\Roman{cnt1})}
{
\usecounter{cnt1}
\leftmargin 3mm
\setlength{\leftmargin}{\leftmargin}
\topsep 0pt
\parsep 0pt
\itemsep 0pt}
}
\newcommand{\eRoman}{\end{list}}
\newcommand{\balph}
{
\begin{list}{\alph{cnt1})}
{
\usecounter{cnt1}
\leftmargin 3mm
\setlength{\leftmargin}{\leftmargin}
\topsep 0pt
\parsep 0pt
\itemsep 0pt}
}
\newcommand{\ealph}{\end{list}}
\newcommand{\bAlph}
{
\begin{list}{\Alph{cnt1})}
{
\usecounter{cnt1}
\leftmargin 3mm
\setlength{\leftmargin}{\leftmargin}
\topsep 0pt
\parsep 0pt
\itemsep 0pt}
}
\newcommand{\eAlph}{\end{list}}
\newcommand{\bbullet}
{
\begin{list}{$\bullet$}
{
\leftmargin  2mm
\setlength{\leftmargin}{\leftmargin}
\topsep 3pt
\parsep 0pt
\itemsep 2pt}
}
\newcommand{\ebullet}{\end{list}}
\newcommand{\bdash}
{
\begin{list}{-}
{
\leftmargin 4mm
\setlength{\leftmargin}{\leftmargin}
\topsep 0pt
\parsep 0pt
\itemsep 0pt}
}
\newcommand{\edash}{\end{list}}
\begin{document}

\mainmatter  

\title{Distributed Selection in $O \lp \log n \rp$ Time with $O \lp n \log \log n \rp$ Messages}
\titlerunning{Efficient Distributed Selection Algorithm}

%
%
\author{Piotr Berman \and Junichiro Fukuyama}
\authorrunning{}

\institute{Department of Computer Science and Engineering\\ The Pennsylvaina State University\\ \email{\tt \{berman, jxf140\}@cse.psu.edu}
}

%
%

\toctitle{}
\tocauthor{}
\maketitle

\begin{abstract}

We consider the selection problem on a completely connected network of $n$ processors with no shared memory.  Each processor initially holds a given numeric item of $b$ bits allowed to send a message of $\max \lp b, \lg n \rp$ bits to another processor at a time. On such a {\em communication network ${\cal G}$}, we show that the $k$th smallest of the $n$ inputs can be detected in $O \lp \log n \rp$ time with $O \lp n \log \log n \rp$ messages. The possibility of such a parallel algorithm for this {\em distributed $k$-selection problem} has been unknown despite the intensive investigation on many variations of the selection problem carried out since 1970s.
The main trick of our algorithm is to simulate the comparisons and swaps performed by the {\em AKS sorting network}, the $n$-input sorting network of logarithmic depth discovered by Ajtai, Koml{\'o}s and Szemer{\'e}di in 1983.
We also show the universal time lower bound $\lg n$ for many basic data aggregation problems on ${\cal G}$, confirming the asymptotic time optimality of our parallel algorithm.

\keywords{the selection problem, distributed selection, AKS sorting network, communication network, comparator network}
\end{abstract}

\section{Introduction} \label{Introduction}

The classical $k$-selection problem finds the $k$th smallest of given $n$ numeric items. We consider it on a completely connected network ${\cal G}$ of $n$ processors with no shared memory, each holding exactly one item of $b$ bits initially. A processor node in ${\cal G}$ may send a message of $\max \lp b, \lg n \rp$-bits at any parallel step.
As the performance metrics,
we minimize the parallel running time, and/or the total number of messages as the measure of amount of information transmitted on ${\cal G}$.

The parallel selection problem for connected processors with no shared memory has been extensively investigated for various network topologies, cases of how $n$ inputs are distributed, and other constraints \cite{KLW07,KDG03,SSS92,F83}, as well as on the parallel comparison tree (PCT) and parallel random access machine (PRAM) models with shared memories \cite{Textbook2,CHR93,AKSS89}. In this paper  we focus on the above case, calling it the {\em $k$-distributed selection problem} on a {\em communication network ${\cal G}$}.
As suggested in \cite{KLW07}, this case of ${\cal G}$ has become increasingly significant for the contemporary distributed computing applications such as sensor networks and distributed hash tables: It is common in their performance analysis to count messages arriving at the designated destinations assuming constant time per delivery (called {\em hops}), rather than count how many times messages are forwarded by processor nodes. The considered network ${\cal G}$ models it well.

The distributed selection problem in this particular network case is one of the long time research topics in parallel algorithms. The results are included in the work such as \cite{KLW07,SSS92,FJ82,SS89}. The following summarizes only a few most closely related to our interest: Let $w$ stand for the number of processors initially holding one or more inputs.
The algorithm Frederickson and Johnson developed in \cite{FJ82} finds the $k$th smallest item with $O \lp w \log \frac{k}{w} \rp$ messages on a completely connected or star-shaped processor network. Santoro et al  \cite{SSS92} discovered in 1992 an algorithm with the expected number of messages bounded by
$O \lp w \log \log \min(k, n-k) + w \log w  \rp$.
The more recent result of \cite{KLW07}  explores a case when the processor network has any diameter $D>1$. It
presents a parallel algorithm with the average time bound $O \lp D \log_D n \rp$ that is asymptotically optimal under some probabilistic assumptions, and one with the deterministic time bound $O \lp D \log_D^2 n \rp$.

The first contribution of this paper is a parallel algorithm for the distributed $k$-selection problem that runs in time $O \lp \log n \rp$ with total $O \lp n \log \log n \rp$ messages on ${\cal G}$.  We will prove the following theorem after formulating the problem.

\begin{theorem} \label{MainClaim}
The $k$th smallest of $n$ inputs can be computed distributedly on a communication network ${\cal G}$ in time $O \lp \log n \rp$ with $O \lp n \log \log n \rp$ messages. \qed
\end{theorem}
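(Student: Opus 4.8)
\medskip
\noindent\textbf{Proof proposal.}

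The plan is to realize the claimed algorithm as a \emph{partial} simulation of the AKS sorting network on ${\cal G}$, where ``partial'' means that only the portion of the network relevant to the target rank $k$ is ever executed. First I would fix the simulation primitive. On the completely connected ${\cal G}$ a single compare--exchange gate on the items currently held by two processors costs $O(1)$ time and $O(1)$ messages — one processor ships its item, the other compares and returns the appropriate one — and since a layer of a comparator network is just a partial matching on wires, no re-indexing of ``which processor holds which wire'' is needed between layers. Hence a comparator network of depth $d$ is simulated in $O(d)$ time with $O(1)$ messages per gate \emph{actually executed}, plus $O\lp n\rp$ extra messages and $O\lp\lg n\rp$ extra time for any prefix--sum/broadcast bookkeeping. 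Because the AKS network has depth $O\lp\lg n\rp$ and size $O\lp n\lg n\rp$, executing it in full already sorts — hence selects — in $O\lp\lg n\rp$ time, but with $O\lp n\lg n\rp$ messages, a $\lg n/\lg\lg n$ factor too many. The entire task is therefore to skip the overwhelming majority of the gates.

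Second, I would isolate the \emph{relevant region}. The AKS network is organized around a tree of bags through which items are pushed by $\ve$-halvers, and its correctness proof maintains, after step $t$, that all but a $\mu^t$-fraction of items sit in a bag consistent with their true rank up to the current bag granularity, of order $\nu^t n$. I would extend this bookkeeping to track a single rank: define, at step $t$, the set of bags that — given the error guarantees established so far — could still hold the item of rank $k$, together with their tree-neighbours (neighbours must be kept because halvers move items only between adjacent bags), and let $R_t$ be the number of wires in these bags. Two things must be proved. \emph{Soundness}: the item of true rank $k$ lies in the relevant region at every step, so once the region has collapsed to $O(1)$ bags the answer is obtained by a final exact AKS sort of just those bags. \emph{Shrinkage}: $R_0=n$, and $R_t$ is $\Theta\lp n\rp$ only through the first $\Theta\lp\lg\lg n\rp$ steps, after which the shrinking bag granularity together with the geometrically decaying error fraction $\mu^t$ forces $R_t$ down geometrically, so the tail contributes only $O\lp n\rp$. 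Simulating nothing but the halver gates incident to the relevant region then costs $\sum_t O(R_t)=O\lp n\lg\lg n\rp$ messages in $O\lp\lg n\rp$ time.

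The remaining bookkeeping I would treat as routine on a complete network: computing the exact rank of a proposed pivot by one aggregation ($O\lp n\rp$ messages, $O\lp\lg n\rp$ time), compacting the $O\lp n/\mathrm{polylog}\,n\rp$ surviving candidates onto a contiguous block of processors by parallel prefix, broadcasting boundary items along a depth-$O\lp\lg n\rp$ tree, and running the full AKS network on the $O(1)$ residual bags (which have $O\lp n/\mathrm{polylog}\,n\rp$ wires, hence $O\lp n\rp$ messages and $O\lp\lg n\rp$ time to sort). All of this fits inside the $O\lp\lg n\rp$ time and $O\lp n\lg\lg n\rp$ message budget.

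I expect the shrinkage estimate to be the crux. One must follow a \emph{fixed} target rank through the AKS bag schedule and show that the uncertainty in which bag holds rank $k$ contracts essentially as fast as the global AKS error, so that the relevant region is $\Theta\lp n\rp$ for only $\Theta\lp\lg\lg n\rp$ steps and geometrically small thereafter; this is exactly the sharpened accounting of the halver parameters $(\ve,\mu,\nu)$ and of the step-by-step potential of the AKS construction that the abstract advertises as also yielding the improved constant in Seiferas' depth bound. I would therefore expect the proof to reopen and tighten the AKS analysis rather than invoke the sorting network as a black box.
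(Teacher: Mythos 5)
There is a genuine gap, and it sits exactly where you predicted: the ``soundness + shrinkage'' lemma for your relevant region. The AKS invariants (and the halver property they rest on) bound \emph{how many} items are misplaced at each stage --- a small fraction of each bag may be strangers of any given displacement --- but they say nothing about \emph{which} items are misplaced. On an adversarial input the item of rank $k$ can itself be one of the few strangers, and then it is carried into bags far from where its rank says it should be. Consequently, a region defined as ``bags that, given the error guarantees established so far, could still hold the rank-$k$ item'' must include every bag whose stranger allowance is nonzero; it does not collapse to $O(1)$ bags, and no geometric decay of $R_t$ of the kind you need follows from the AKS accounting. Your claim that $R_t=\Theta(n)$ only for the first $\Theta(\log\log n)$ steps and is geometrically small thereafter is therefore unsupported, and the final step (``an exact AKS sort of just those bags'') is unsound precisely in the bad case where the answer is a stranger. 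Re-tightening the $(\ve,\mu,\nu)$ potentials cannot repair this, because the obstruction is not quantitative slack but the fact that the guarantees are about counts, not about the trajectory of a designated item.

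The paper resolves this in a different way that you would need to import. It truncates the AKS network at depth $\Theta(\log\log n)$ and proves (Theorem 2, via Lemmas 4--6) that the truncated network is a weak $\lg^{-c}n$-halver: at most $n\lg^{-2}n$ items end up on the wrong side of the median. It then \emph{recaptures all possibly misplaced items, including possibly the answer}, rather than tracking the answer: each half is cut into blocks of size $\lg n$, the block maxima (resp.\ minima) are sorted by a small AKS network of size $O(n/\log n)$, and every block whose maximum is among the largest $n\lg^{-2}n$ maxima is swept into a candidate set $C$; a block left out of $C$ provably contains no stranger and not the median, and equally many items are discarded from both halves, so the median of $C$ (found by one more small AKS sort, $|C|=O(n/\log n)$) is the global median; general $k$ is handled by padding with $-\infty$. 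The $O(n\log\log n)$ message bound comes from simulating the width-$n$, depth-$O(\log\log n)$ truncated network, and everything else costs $O(n)$ messages and $O(\log n)$ time --- so the mechanism that lets one stop after $\Theta(\log\log n)$ comparator layers is the weak-halver theorem plus the block-filtering/balanced-removal argument, not a shrinking neighbourhood of the target rank. Your outer framework (simulate comparators at unit cost on the complete network, prefix-sum compaction, small final sorts) matches the paper's implementation, but without a substitute for Theorem 2 and the filtering step the proposal does not establish Theorem 1.
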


\noindent
Such a parallel algorithm has been unknown in the long research history on the selection problem. Satisfying the constraint of total $O \lp n \log \log n \rp$ messages, it improves the parallel time bound $O \lp D \log_D^2 n \rp$ in \cite{KLW07} when $\log^2 D \ll \log n$: On a processor network with diameter $D$, a message can be sent to anywhere forwarded by
$O \lp D \rp$ processors. So the theorem means the parallel time bound $O \lp D \log n \rp \ll D \log_D^2 n$.

Our algorithm simulates the comparisons and swaps performed by the {\em AKS sorting network}, the $n$-input sorting network of $O \lp \log n \rp$ depth discovered by Ajtai, Koml{\'o}s and Szemer{\'e}di in 1983 \cite{AKS}. Known for the difficulty of its performance analysis, the AKS sorting network itself has been a research subject in parallel algorithm. Paterson \cite{Paterson} simplified its construction. Seiferas \cite{Seiferas} further clarified it with the estimate that the depth can be at most $7 \cdot 6.07 \lg n=48.79 \lg n$ layers of  $1/402.15$-halvers. Here an {\em $\ve$-halver} $\lp \ep \in (0, 1) \rp$ is a comparator network of a constant depth to re-order $n$ inputs \cite{Paterson,Textbook}, such that for every $z \in \lbr 0, n/2 \rbr \cap \Z$, the left half of the output includes at most $\ve z$ items among the $z$ largest inputs, and the right half at most $\ve z$ among the $z$ smallest inputs.
In the recent work by Goodrich \cite{Goodrich}, the constant factor of the total $O \lp n \log n \rp$ nodes is significantly reduced although its depth bound is $O \lp n \log n \rp$. The overall research interest on the AKS sorting network has been simpler understanding and reduction of the constant factor of the asymptotic quantities.

In order to design our parallel algorithm for the distributed selection problem, we devise an $n$-input comparator network of depth $O \lp \log \log n \rp$ that reduces strangers to at most $O \lp n \log^{-c} n \rp$ for any given constant $c>0$. Here a {\em stranger} is the $z$th largest input ($1 \le z \le n$) that exists in the left half of the output despite   $z > \frac{1}{2}n$, or in the right half despite $z \le \frac{1}{2}n$. Formally we show:

\begin{theorem} \label{Halver}
For every $c \in \R^+$, and $n \in \Z^+$ that is a sufficiently large power of 6, there exists a comparator network ${\cal H}$ to re-order $n$ inputs satisfying the following two.
\begin{enumerate} [i)]
\item The output includes less than $n \lg^{-c} n$ strangers. and
\item The depth is at most  $1296 cd_{\frac{1}{5}} \lg \lg n$ where $d_{\ve}$ stands for the minimum depth of an $\ve$-halver.
\qed
\end{enumerate}
\end{theorem}

Such ${\cal H}$ will be constructed from given $1/5$-halvers through comparator networks we devise in Section 3 as gadgets. A $1/5$-halver of a depth at most $2 \cdot 7^{21}$ is built in Appendix with the classical result of Gabber and Galil on expander graphs \cite{Galil}.
Consequently the depth of ${\cal H}$ is at most $2592 \cdot 7^{21} c \lg \lg n$. This value would be even far larger if $\ve$-halvers with some smaller $\ve$ were used. Our parallel algorithm for the distributed selection problem mimics this ${\cal H}$ also.

Our third result is the running time lower bound $\lg n$ for many data aggregation problems on ${\cal G}$.
The class of problems having the lower bound is huge including the selection problem and many others. The following statement will be confirmed as a corollary to the theorem we will show in \refsec{LB}.

\medskip

\begin{corollary} \label{cor1}
Any parallel algorithm takes at least $\lg n$ steps in the worst case to compute each of the following three problems on a communication network distributedly: i) the $k$-selection problem on $n$ inputs each of at least $\lc \lg n \rc+1$ bits; ii) the problem of finding the sum of $n$ inputs; iii) the problem of counting numeric items among $n$ inputs, each exceeding a given threshold.  \qed
\end{corollary}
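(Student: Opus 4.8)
The plan is to obtain the corollary from a single information-flow bound on $\mathcal{G}$ and then to check that each of the three problems triggers it. First I would formalize the \emph{knowledge set} $K_p(t)$ of processor $p$ after $t$ parallel steps as the set of input indices $i$ for which the local state of $p$ at time $t$ can differ between two input assignments that agree on every coordinate except $i$. Since $K_p(0)=\{p\}$, and since the new state of $p$ at a step is a function of its previous state together with the (at most one, by the model's message discipline) message it receives from some processor $q$, a straightforward induction on $t$ gives $K_p(t)\subseteq K_p(t-1)\cup K_q(t-1)$, hence $|K_p(t)|\le 2^{t}$; the same induction shows that the entire sequence of messages arriving at $p$ by time $t$ is determined by the inputs indexed by $K_p(t)$.

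Next I would isolate the property that forces the bound. Call a function $f$ of $n$ inputs \emph{everywhere sensitive} if for every coordinate $i$ there are an assignment $x$ and a value $x_i'$ with $f(x)\ne f(x\,[i\!\mapsto\! x_i'])$. If a designated processor must announce $f$ after $T$ steps with $T<\lg n$, then $|K_{\mathrm{out}}(T)|\le 2^{T}<n$, so some coordinate $i$ lies outside $K_{\mathrm{out}}(T)$; running the algorithm on the two assignments that witness sensitivity at $i$ produces identical message sequences, hence identical behaviour, at the output processor, which must nevertheless return two different answers --- a contradiction. (The argument is unchanged if every processor must output, or if the output node is chosen adaptively, since the two executions are literally indistinguishable at every processor outside the relevant knowledge sets.) Thus every everywhere-sensitive $f$ requires $T\ge\lg n$, which is the general statement of \refsec{LB}; the corollary is the announced special case.

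It then remains to verify everywhere sensitivity for the three problems. For the sum this is immediate --- altering any single input changes the total, and one bit per input suffices. For counting the inputs that exceed a threshold $\theta$, I would fix every input other than $i$ to a value $\le\theta$ and switch $x_i$ between a value $\le\theta$ and a value $>\theta$, changing the count by exactly one. The case needing real care --- and the step I expect to be the main obstacle --- is $k$-selection. Here I would start from an assignment with $n$ pairwise-distinct values (available because $b\ge\lceil\lg n\rceil$) and, given $i$, relocate $x_i$ to a value lying strictly between the current order statistics of ranks $k$ and $k+1$, or, if $x_i$ already has rank $\ge k$, strictly between ranks $k-1$ and $k$; either move changes the identity, and hence the value, of the $k$-th smallest element. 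Having $2^{b}\ge 2n$, i.e.\ $b\ge\lceil\lg n\rceil+1$ --- exactly the hypothesis of the first case --- is precisely what guarantees a free value in the required interval for every coordinate $i$ and every rank $k\in\{1,\dots,n\}$, including the extremes $k=1$ and $k=n$. So the only non-generic part of the proof is this pigeonhole bookkeeping: confirming that one bit of width beyond the distinctness threshold always leaves room to push a single input across the pivot so that the reported order statistic genuinely moves.
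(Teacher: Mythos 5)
Your overall route is the same as the paper's: reduce the corollary to the general lower bound of \refsec{LB} for functions that are sensitive in every coordinate, via the doubling bound $2^t$ on how many inputs can influence one node, and then exhibit witnesses for the three problems. However, there is a genuine flaw in the general step as you state it. Your notion of ``everywhere sensitive'' lets the witnessing assignment $x$ depend on the coordinate $i$; that is strictly weaker than the paper's ``critical everywhere,'' which demands a \emph{single} assignment witnessing \refeq{CriticalEverywhere} for all $i$ simultaneously, and the weaker notion does not imply the $\lg n$ bound when the communication pattern is adaptive. Concretely, take $f(x)$ to be the least significant bit of $x_j$ with $j=1+(x_1 \bmod n)$: this $f$ is sensitive at every coordinate under some assignment depending on that coordinate, yet it is computable in $O(1)$ rounds (node $1$ reads the address from its own input and contacts node $j$ directly). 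The same example breaks your knowledge-set lemma under your own definition: since you quantify $K_p(t)$ over \emph{all} pairs of assignments, the output node's $K$ at time $2$ already contains every index, so $|K_p(t)|\le 2^t$ fails; the inductive step silently assumes that \emph{which} processor sends to $p$ at step $t$ is input-independent. The repair is exactly the paper's setup: fix one assignment $X$ sensitive at all coordinates, define the influence sets $V_{p,t}$ along the actual execution on $X$ (so the doubling bound holds), and argue indistinguishability only for inputs agreeing with $X$ outside the missing coordinate. Your three applications survive this repair, because in each case a single common base assignment works (any assignment for the sum, an all-below-threshold assignment for counting, and a fixed gapped assignment for selection with only $\hat x_i$ varying).

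On the selection witnesses themselves your treatment is sound and in fact more careful than a naive one: moving $x_i$ across the pivot (into the gap between ranks $k$ and $k+1$ when $x_i$ lies below rank $k$, and between ranks $k-1$ and $k$ otherwise) really does change the $k$th order statistic, including the extremes. But the justification ``distinct values exist since $b\ge\lceil\lg n\rceil$, and $2^b\ge 2n$ then guarantees a free value in the required interval'' is not right as written: an arbitrary distinct assignment (e.g.\ consecutive integers) leaves no free value next to the pivot, and when $2^b=2n$ no single assignment has free values in all $n+1$ gaps at once. What saves you is that for a fixed $k$ only two gaps are needed (just below and just above the $k$th order statistic, interpreted as ``below the minimum'' when $k=1$ and ``above the maximum'' when $k=n$), so with one extra bit you can \emph{choose} the base assignment with those two free values; state the choice explicitly rather than appealing to pigeonhole. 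With these two repairs your argument matches the paper's proof of Corollary~\ref{cor1} (which exhibits an explicit assignment and perturbations $\hat x_i$ and invokes Theorem~\ref{LowerBound}).
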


\noindent
By our formulation in \refsec{Def}, the statement assumes that each of the $n$ processor nodes holds one numeric item at time 0.
The corollary shows the asymptotic time optimality of our parallel algorithm as well.

The rest of the paper consists as follows. In Section 2, we define general terminology showing related facts. After devising our gadgets from $\ve$-halvers in Section 3, we will prove \refth{Halver} in Section 4, \refth{MainClaim} in Section 5, and the universal lower bound $\lg n$ in Section 6, followed by concluding remarks in Section 7.

\section{General Terminology and Related Facts} \label{Def}

In this paper, a {\em communication network} ${\cal G}$ means a collection of $n$ processor nodes each two of which are connected, capable of exchanging a message of $\max \lp b, \lg n \rp$ bits at a parallel step where $b \in \Z^+$ is given.  We consider a computational problem $P$ whose input is a set of $n$ numeric items of $b$ bits {\em distributed over ${\cal G}$}, $i.e.$, every processor holds exactly one input item at time 0. A parallel algorithm $A$ is said to {\em compute $P$ on ${\cal G}$ distributedly in time $t$ with $m$ messages}, if all the bits of the computed result from the $n$ inputs are stored at a designated processor node after the $t$th parallel step with total $m$ messages. Messages may be exchanged asynchronously on ${\cal G}$.


A {\em comparator network ${\cal H}$} is a directed acyclic graph of 2-input comparators. For notational convenience, we say that the {\em width} of ${\cal H}$ is the maximum number of nodes of the same depth.  In the standard terminology, a numeric item input to a comparator is called {\em wire}. We may call an ordered set of wires {\em array}, for which the set theoretical notation is used.

The AKS sorting network is a comparator network that sorts $n$ inputs with a depth $\Theta \lp \log n \rp$ and the width at most $n$, performing $O \lp n \log n \rp$ comparisons.

We observe here that the $k$th smallest of $n$ items distributed over ${\cal G}$ can be detected in
$O \lp \log n \rp$ time with $O \lp n \log n \rp$ messages the following way. We design such a parallel algorithm $A$ so that a processor node $v$ in ${\cal G}$ simulates from time $ct$ to $c(t+1)$ ($c:$ a constant, $t \in \N$) a comparator of depth $t$ in the AKS sorting network ${\cal H}$. We mean by ``simulate" that $v$ receives two items from other nodes to send their minimum and/or maximum to anywhere on ${\cal G}$.
Due to the width of the AKS sorting network, and since ${\cal G}$ is completely connected,
$n$ processors in ${\cal G}$ can simulate all the swaps and copying performed by ${\cal H}$.
This way $A$ sorts the $n$ inputs in $O \lp \log n \rp$ steps, then fetches the $k$th smallest.

Our parallel algorithm improves the above so that the number of messages is reduced to $O \lp n \log \log n \rp$.

\section{Devising Gadgets from Expander Graphs for Selection}

In this section, we devise some comparator networks $S$ for the design of our parallel algorithm. Such an $S$  re-orders given distinct $m$ numeric items where $m \in 2 \Z$. For a constant $\ve \in (0, 1)$, consider a bipartite graph $(V_1, V_2, E)$ such that $|V_1|=|V_2|=\frac{1}{2}m$ and
\beeq{Expander}
\varepsilon |\Gamma \lp U \rp|
>
(1- \varepsilon) \min \lp \varepsilon |V_i|, |U| \rp,
\eeq
for every subset $U$ of $V_1$ or $V_2$, where $\Gamma(U)$ denotes the neighbor set of $U$. The sets $V_1$ and $V_2$ represent the left and right halves of the input, respectively, and an edge in $E$ a comparison between the two items.
It is straightforward to check that a comparator network $S$ derived from \refeq{Expander} in the obvious manner is an $\ve$-halver.

An {\em expander}  is a graph $G$ whose {\em expansion ratio} $|\Gamma \lp U \rp| / |U|$ is above a constant greater than 1 for any node set $U$ meeting some condition such as $|U| \le \frac{1}{2}|V(G)|$. The bipartite graph $G= (V_1, V_2, E)$ is an expander in this sense whose explicit construction is given in the aforementioned Gabber-Galil result. We demonstrate in Appendix that it leads to a $1/5$-halver of depth $2 \cdot 7^{21}$, and also an $\ve$-halver of a constant depth for any given $\ve$ can be explicitly constructed. The AKS sorting network sorts numeric items with $\ve$-halvers.

It is well-known in spectral graph theory that the minimum expansion ratio is good if $|U|/ |V(G)|$ is bounded and the second largest eigenvalue $\lambda \lp G \rp$ of $G$ is small. In \cite{LPS86}, a $d$-regular graph called {\em ramanujan graph} is explicitly constructed with the asymptotically optimal bound $\lambda \lp G \rp \le 2 \sqrt{d-1}$, improving the Gabber-Galil result. However not much is known beyond this point regarding the explicit construction of an $\ve$-halver.

\medskip

\begin{figure}
\centering
\includegraphics[width=85mm]{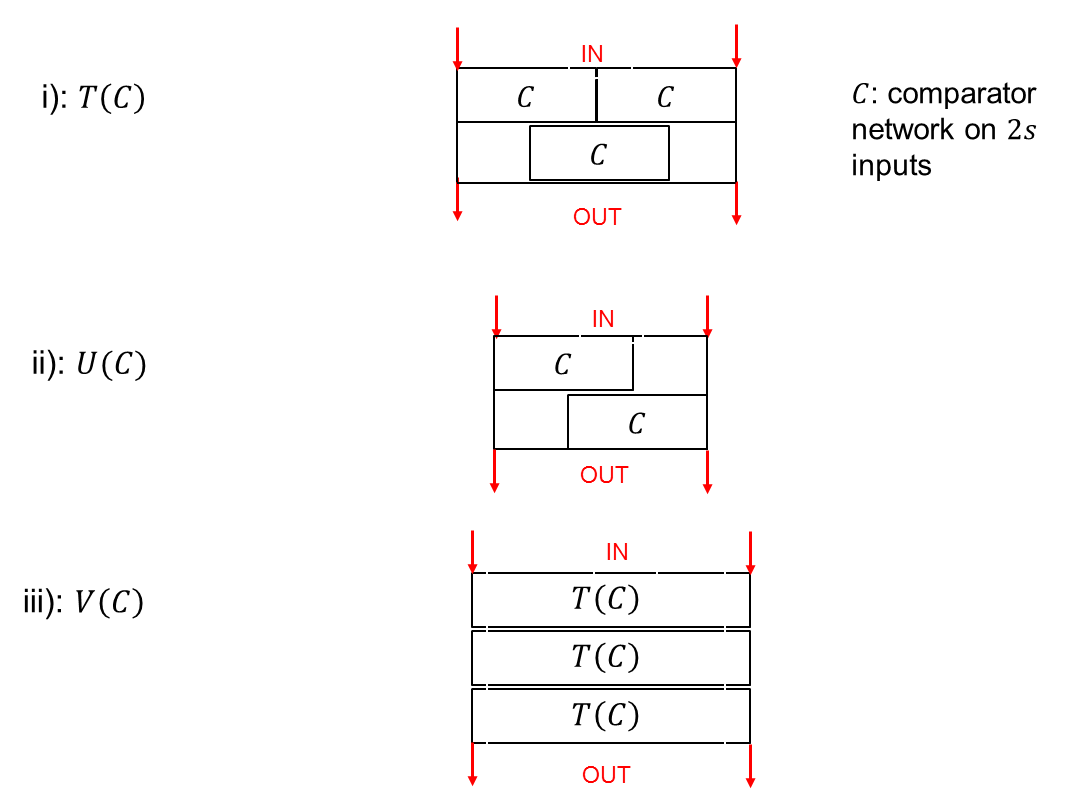}
\caption{Three Comparator Networks $T(C)$, $U(C)$ and $V(C)$} \label{fig1}
\end{figure}

Our gadgets $S$ for the selection problem is constructed from a given $\ve$-halver below.
For any comparator network $C$ with $2s$ inputs and outputs, construct the three comparator networks $T(C)$, $U(C)$ and $V(C)$ as in \reffig{fig1}.
They are on $4s$, $3s$ and $4s$ inputs with 2, 2 and 6 layers of $C$, respectively.
Let $S_0$ be an $\ve$-halver.
Recursively construct
\[
S_i = V \lp S_{i-1} \rp
\eqand
S'_i = U(S_i),
\textrm{~~~for~} i \in \Z^+,
\]
and define
\[
\ve_{i} = \ve_{i-1}^2 + 10 \ve_{i-1}^3,
\]
where $\ve_0 =\ve$.

\begin{figure}
\centering
\includegraphics[width=100mm]{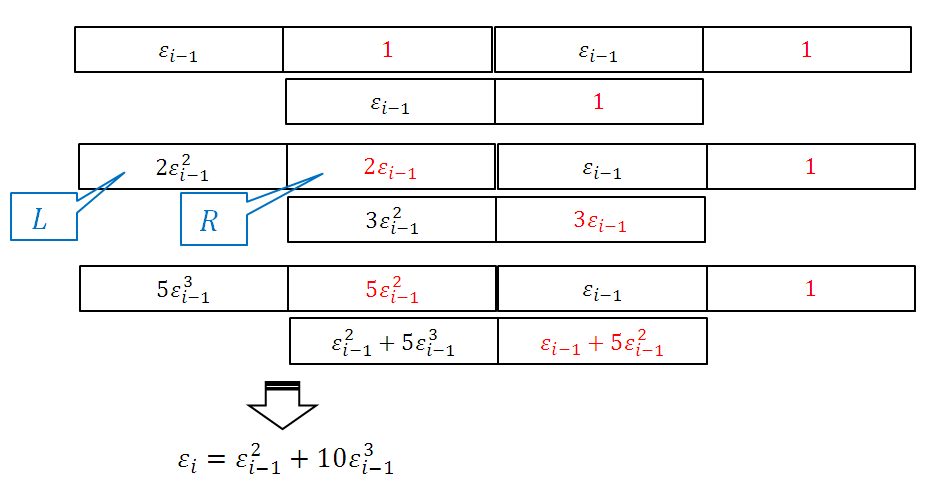}
\caption{Upper Bounds on the $z$-Stranger Ratios in $S_i$} \label{fig1_1}
\end{figure}

Consider the following property $P$ of a comparator network.

\medskip

\noindent
{\bf $P(q, l, s , r )$}  $\Big( q, l \in \Z^+$ and $s, r \in (0,1) \Big)$:
\begin{enumerate} [i)]
\item The comparator network re-orders $m$ inputs ($m \in q \Z^+$) by $l$ layers of $\ve$-halvers.
\item For any given $z \in \Z^+$ at most $\frac{m}{q}$, there are no more than $rz$ items each of rank at most $z$ in the left $s$ of the output.
\end{enumerate}

$S_i$ satisfies the property $P \lp 2^{i+1}, 6^i, \frac{1}{2} , \ve_i \rp$, and $S'_i$ the property \\
$P \lp 3 \cdot 2^i, 2 \cdot 6^i, \frac{2}{3} , 2 \ve_i \rp$.
In \reffig{fig1_1} illustrating $S_i$, consider each $z \le \frac{m}{q}$ where $q=2^{i+1}$.  Let us call an item {\em $z$-stranger} if it has a rank at most $z$, existing in the left $1- \frac{1}{q}$ of the considered layer.
The figure shows the upper bounds on the ratios of $z$-strangers: Each number in black indicates the upper bound on the number of $z$-strangers in the current component divided by $z$.
Each number in red indicates that in the current and its left components divided by $z$.

In particular, there are $2\ve_{i-1} z$ or less $z$-strangers in the component $L \cup R$. As it satisfies the property $P \lp 2^i, 6^{i-1}, \frac{1}{2}, \ve_{i-1} \rp$, the output of $L$ includes $2 \ve_{i-1}^2 z$ or less $z$-strangers. One can check the other bounds to verify the property $P \lp 2^{i+1}, 6^i, \frac{1}{2}, \ve_i \rp$ of $S_i$.
Similarly, $S'_i$ satisfies
$P \lp 3 \cdot 2^i, 2 \cdot 6^i, \frac{2}{3}, 2 \ve_i \rp$.

\begin{figure}
\centering
\includegraphics[width=90mm]{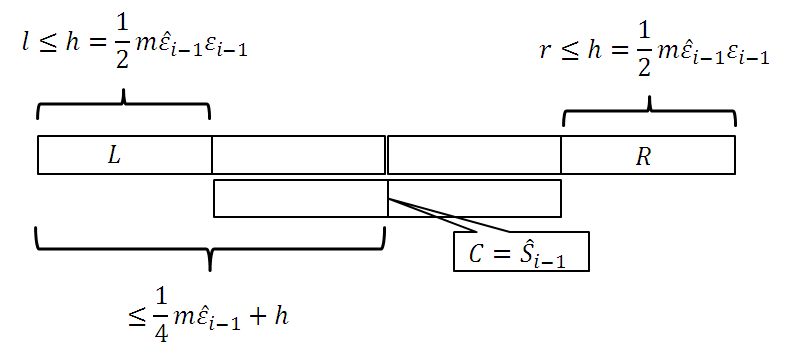}
\caption{To See the Property $\hat{P}$ of $\hat{S}_i$} \label{fig1_2}
\end{figure}

Now let
\[
\ve = \frac{1}{5}.
\]
Construct $\hat{S}_i$ by recursively replacing the second layer of $T \lp S_{i-1} \rp$ by $\hat{S}_{i-1}$, where $\hat{S}_0=S_0$. Define
\[
\hat{\ve}_i = \hat{\ve}_{i-1} \lp \frac{1}{2} +  \ve_{i-1} \rp
\textrm{~for $i=\Z^+$, ~~~and~~~}
\hat{\ve}_0 = \ve.
\]
Here we slightly generalize the definition of a {\em stranger} given in Chapter 1:
The item of rank $z$ is a stranger if $z \le \frac{m}{2}$ and it is in the left half of the considered layer, or if $z > \frac{m}{2}$ and it is in the right half.

Each $\hat{S}_i$ ($1 \le i \le 4$) satisfies $\hat{P} \lp 2^{i+1}, \frac{6^i+4}{5} , \hat{\ve}_{i-1}, \hat{\ve}_i \rp$ where:

\medskip

\noindent
Property $\hat{P} \lp q, l, r, r' \rp$ $\Big( q, l \in \Z^+$ and $r, r'  \in (0, 1) \Big)$:
\begin{enumerate} [i)]
\item The comparator network re-orders $m$ inputs ($m \in q \Z^+$) by $l$ layers of $\ve$-halvers.
\item If there are at most $r m$ strangers in the input, there are at most $r' m$ strangers in the output.
\end{enumerate}

\medskip

\noindent
In \reffig{fig1_2} illustrating $\hat{S}_i$, let $l$ and $r$ be the numbers of strangers output from $L$ and $R$, respectively, so $l \le r \le \frac{1}{2} \hat{\ve}_{i-1} m$.
They are both bounded by  $h=\frac{1}{2}m \hat{\ve}_{i-1} \ve_{i-1}$ since each half of the first layer\footnote{
Here $\frac{1}{2} m \hat{\ve}_{i-1} \le \frac{1}{2} m 2^{-i}$ for $\ve = \frac{1}{5}$ and $i  \le 4$. So we can use the property $P \lp 2^i, 6^{i-1}, \frac{1}{2}, \ve_{i-1} \rp$ of $S_{i-1}$.
}
is $S_{i-1}$. The rank of the median of $m$ inputs is $\frac{1}{4} m - r +l$ in the input of $C$ that is a $\hat{S}_{i-1}$. So the total number of strangers in the left half of the output is bounded by
\[
\frac{1}{4} m \hat{\ve}_{i-1} + (r-l) + l
\le \frac{1}{4} m \hat{\ve}_{i-1} + h
= \frac{1}{2} m \hat{\ve}_i.
\]
This proves the condition ii) of $\hat{P} \lp 2^{i+1}, \frac{6^i+4}{5} , \hat{\ve}_{i-1}, \hat{\ve}_i \rp$. The first condition is straightforward to check.

In addition, construct $\hat{S}$ by vertically aligning $\hat{S}_0, \hat{S}_1, \hat{S}_2, \hat{S}_3$ and $\hat{S}_4$, so that all the $m$ outputs of $\hat{S}_{i-1}$ are input to $\hat{S}_i$. It satisfies
$\hat{P} \lp 32, 315, 1, \hat{\ve} \rp$,
where
\[
\hat{\ve} = \hat{\ve}_4 < 0.02314.
\]
We also denote
\[
S = S_3,~~~
S' = S'_3, \eqand
\ve' = 2 \ve_3 < 0.002644,
\]
so $S$ and $S'$ satisfies $P \lp 16, 216, \frac{1}{2}, \ve' \rp$ and $P \lp 24, 432, \frac{2}{3}, \ve' \rp$, respectively.
We will use the obtained $S$, $S'$ and $\hat{S}$ in the next section.

\section{Reducing Strangers by a Comparator Network of Depth $O \lp \log \log n \rp$}

\begin{figure}
\centering
\includegraphics[width=125mm]{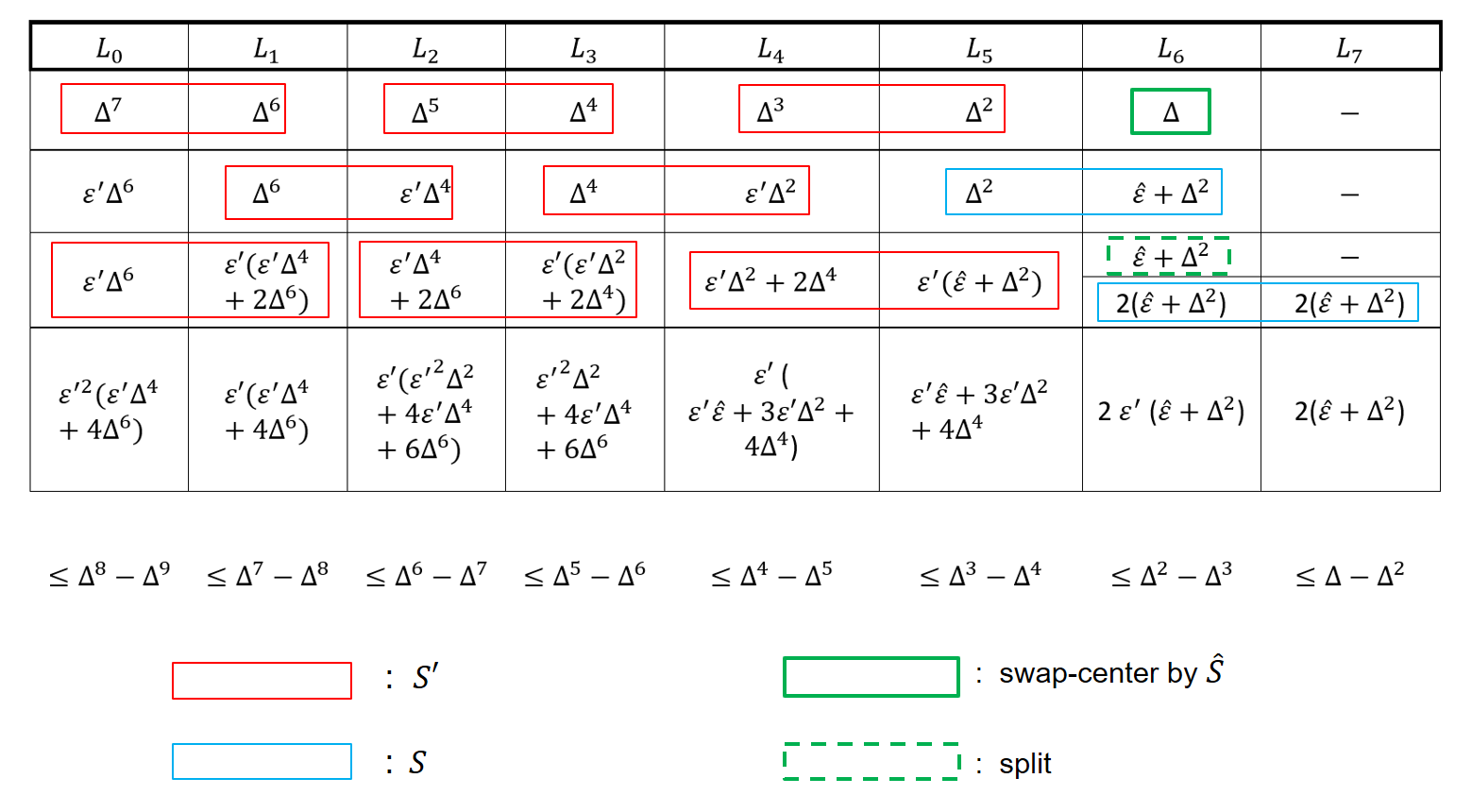}
\caption{Maximum Stranger Ratios in the Segment $j=7$} \label{fig2}
\end{figure}

We prove \refth{Halver} by explicitly building the desired comparator network ${\cal H}$ from given $\ve$-halvers where
$
\ve = 1/5
$.
Assume without loss of generality that the $n$ inputs to ${\cal H}$ have distinct values\footnote{For $s$ elements of a same value $g$, regard that they are valued $g+\delta, g+2 \delta, \ldots, g+ s \delta$ for an infinitesimal number $\delta>0$.
All the statements in our proof remain true with this change.
}.

Here are the construction rules.

\begin{enumerate} [$\bullet$]
\item There are {\em segments} $j=0, 1, 2, \ldots, j_{max}-1$ of ${\cal H}$ where
$j_{max}=\lf c \lg \lg n \rf$. The segment $j>0$ re-orders the $n$ outputs of $j-1$ with a constant number of layers of $\ve$-halvers.

\item The segment $j=0$ consists of $\hat{S}$ on $n$ inputs.

\item
Partition the left half of the input to the $j$th segment into subarrays\\ $L_0, L_1, \ldots, L_{j-1}, L_j$ from the left to right in that order, satisfying
\[
|L_i| =
\lb \begin{array}{cc}
2^{-i-1} n, & \textrm{if $0 \le i \le j-1$,}\\
2^{-i} n, & \textrm{if $i =j$.}\\
\end{array} \right.
\]
So $|L_j|=|L_{j-1}|$.

\item Symmetrically, define $R_0, R_1, \ldots, R_{j-1}, R_j$ in the right half from the right to left.

\item The $j$th segment performs the following four operations.
\begin{enumerate} [-]
\item {\em split:} Split $L_{j-1}$ into its halves re-naming them $L_{j-1}$ and $L_j$. Split $R_{j-1}$ into $R_{j-1}$ and $R_j$ also.
\item {\em compare-even-pairs:} Apply $S'$ or $S$ to every pair $L_{2i}$ and $L_{2i+1}$ ($i=0, 1, \ldots$) currently existing in the left half. Apply $S'$ if they have the different sizes and $S$ otherwise.
Also perform it in the right half symmetrically.
\item {\em compare-odd-pairs:} Same except for pairing $L_{2i+1}$ with $L_{2i+2}$.
\item {\em swap-center:} Apply $\hat{S}$ to $L_{j} \cup R_{j}$.
\end{enumerate}
\item If $j$ is odd, perform {\em compare-even-pairs} and {\em swap-center} simultaneously, then {\em compare-odd-pairs}. Further perform {\em split} and {\em compare-even-pairs}, each once in that order.  If $j$ is even, perform the same operations switching the parity.
\end{enumerate}
\noindent
The segment $j=7$ is illustrated in \reffig{fig2}.

\medskip

Each segment consists of at most $6^4 = 1296$ layers of $\ve$-halvers by the property $P$ of $S$ and $S'$, and $\hat{P}$ of $\hat{S}$.
The total depth of ${\cal H}$ does not exceed $1296  c d_{\frac{1}{5}} \lg \lg n$ as claimed.

Let $\ve'$ and $\hat{\ve}$ be as in the previous section, and let
\[
\Delta = \frac{1}{10}.
\]
Denote by $l_{i, j}$ and $r_{i, j}$ the number of strangers in $L_i$ and $R_i$ of the $j$th segment, respectively.
Our invariant is
\beeq{Invariant}
\sum_{i'=0}^i l_{i', j} \le \Delta^{j-i+1} |L_i|,
\eqand
\sum_{i'=0}^i r_{i', j} \le \Delta^{j-i+1} |R_i|,
\eeq
for every $j$ and $i$ such that $0 \le j < j_{max}$ and $0 \le i \le j$.
We prove it as a lemma.

\begin{lemma} \label{InvariantLemma}
\refeq{Invariant} holds for each $j$ and $i$.
\end{lemma}
\begin{proof}
Show \refeq{Invariant} by induction on $j$.
The $0$th segment consists of $\hat{S}$. The invariant is true by its property $\hat{P} \lp 16, 315, 1, \hat{\ve} \rp$.
Assume true for $j-1$ and prove true for $j$. We only show the induction step for the left half of the $j$th segment when $j$ is odd. The other cases are handled similarly. \reffig{fig2} illustrates the induction step for $j=7$.

Let $l'_i$ be the number of strangers in $L_i$ of the $j$th segment at the considered moment.
It meets the following bound right after {\em compare-even-pairs} and {\em swap-center}.
\[
\frac{l'_i}{|L_i|} \le
\lb \begin{array}{cc}
\ve' \Delta^{j-i-1}, & \textrm{if $i \le j-2$ and $i$ is even,} \\
\Delta^{j-i}, & \textrm{if $i \le j-2$ and $i$ is odd,} \\
\hat{\ve}  + \Delta^2, & \textrm{if $i=j-1$.} \\
\end{array} \right.
\]
The third row of \reffig{fig2} shows the same numbers in the RHS. They mean slightly stronger than the above: The right number in $l$th box  ($l=1, 2, 3$) indicates the upper bound on the number of strangers in $L_{2l-1} \cup L_{2l}$ divided by $|L_{2l}|$.

The correctness of the bounds is due to the properties $P$ and $\hat{P}$ of the used $S'$ and $\hat{S}$. The bound $\hat{\ve}  + \Delta^2$ of the case $i=j-1$ is justified as follows: Its $\Delta^2$ comes from that of the case $i=j-2$ as the median of $L_{j-1} \cup R_{j-1}$ differs from that of all $n$ by at most $\Delta^2 |L_{j-1}|$. Also there are no more than $|L_{j-1}| \hat{\ve}$ elements in $L_{j-1}$ greater than the median of $L_{j-1} \cup R_{j-1}$.

After {\em compare-odd-pairs} and {\em split},
\[
\frac{l'_i}{|L_i|} \le
\lb \begin{array}{cc}
\ve' \Delta^{j-1}, & \textrm{if $i=0$,}\\
\ve' \lp \ve' \Delta^{j-i-2} + 2 \Delta^{j-i} \rp, & \textrm{if $1 \le i \le j-4$ and $i$ is odd,} \\
\ve' \Delta^{j-i-1} + 2 \Delta^{j-i+1},  & \textrm{if $2 \le i \le j-3$ and $i$ is even,} \\
\ve' \lp \hat{\ve} + \Delta^2 \rp, & \textrm{if $i=j-2$,} \\
2 \lp \hat{\ve} + \Delta^2 \rp, & \textrm{if $j-1 \le i \le j$,} \\
\end{array} \right.
\]
as indicated in the 4th row of \reffig{fig2}.

Likewise, right after the 2nd {\em compare-even-pairs}, one can check
\[
\frac{l'_i}{|L_i|} \le \Delta^{j-i+1} - \Delta^{j-i+2},
\]
for each $i \le j$.
These mean the $j$th invariant completing the induction step. The lemma follows.
\qed
\end{proof}

Hence, the total number of strangers in the output of ${\cal H}$ is bounded by
$
2 \Delta |L_{j_{max}-1}| < \Delta 2^{- c \lg \lg n + 3} < n \lg^{-c} n
$. This proves \refth{Halver}.

\section{Selection in $O \lp \log n \rp$ Time with $O \lp n \log \log n \rp$ Messages on a Communication Network}

We show \refth{MainClaim} in this section.
Our parallel algorithm for the distributed $k$-selection problem on a communication network ${\cal G}$ mimics the comparator network ${\cal H}$ given by \refth{Halver} and three AKS sorting networks. Keep assuming that $n$ is a sufficiently large power of 6 and the inputs have distinct values. Let
\[
q=\lg n.
\]
Choose $c=2$ in \refth{Halver}, so the output of ${\cal H}$ includes less than $n \lg^{-2} n$ strangers.
Numeric expressions in this section omit obvious floor/ceiling functions if any.
We first describe the algorithm focusing on the case $k=\frac{n}{2}$.

\medskip

{\noindent\bf Algorithm 1 for Finding the Median of $n$ Inputs:}
\begin{enumerate} [1.]
\item Apply ${\cal H}$ to the given $n$ inputs. Let $L$ and $R$ be its left and right halves of the output, respectively, and $C$ be an empty array.
\item
Arbitrarily partition $L$ into $\frac{n}{2q}$ subarrays each of $q$ items naming them\\ $L_1, L_2, \ldots, L_{\frac{n}{2q}}$.
\item Find the maximum value $l_i$ in each $L_i$.
\item Sort the $l_1, l_2, \ldots, l_{\frac{n}{2q}}$ by an AKS sorting network.
Add the elements of $L_i$ to $C$ for every $i$ such that $l_i$ is among the largest $n \lg^{-2} n$ of $l_1, l_2, \ldots, l_{\frac{n}{2q}}$.

\item Perform Steps 2--4 to $R$ symmetrically to update $C$.

\item Sort the elements of $C$ by another AKS sorting network. Find its median, returning it as the median of all $n$.
\end{enumerate}

Each AKS sorting network used above re-orders at most $2 n \lg^{-1} n$ inputs. This allows us to detect the median with a sufficiently small number of comparisons, leading to $O \lp n \log \log n \rp$ messages. We confirm the correctness of the algorithm first.

\begin{lemma} \label{L3} Algorithm 1 correctly finds the median of all $n$ inputs.
\end{lemma}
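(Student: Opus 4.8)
The plan is to show that the true median of the $n$ inputs survives every filtering step of Algorithm 1 and lands in $C$, and that $|C|$ is small enough that its median (computed exactly by the final AKS network) equals the global median. Write $M$ for the true median, so $M$ is the $\frac{n}{2}$-th smallest input. Since $\mathcal{H}$ is a weak $\lg^{-2}n$-halver (Theorem~\ref{Halver} with $c=2$), the left half $L$ contains at most $n\lg^{-2}n$ items larger than $M$, and the right half $R$ contains at most $n\lg^{-2}n$ items $\le M$. In particular $M\in R$ (it is $\le M$, hence cannot be one of the $\le n\lg^{-2}n$ misplaced strangers in $L$ once $n$ is large, since $L$ has $n/2$ slots but... more carefully: $M$ is either in $L$ or $R$; if $M\in L$ then $M$ is an item $\le M$ in $L$, which is fine — so I should not claim $M\in R$. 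Instead I track $M$ wherever it is).

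The \textbf{key steps}, in order. (1) \emph{Locate $M$.} WLOG $M\in R$ (the case $M\in L$ is symmetric by Step~5). Every item of $R$ that is $\le M$ is one of the at most $n\lg^{-2}n$ strangers in $R$; the remaining items of $R$ are all $> M$, except $M$ itself. (2) \emph{$M$'s block is selected into $C$.} Let $R_i$ be the block containing $M$, with block-max $r_i=\min$... wait, Step~5 is symmetric to Steps 2–4 applied to $R$, so for $R$ we take block-\emph{minima} and keep blocks whose minimum is among the \emph{smallest} $n\lg^{-2}n$ of them. The block $R_i$ containing $M$ has minimum $\le M$. I claim $r_i=\min R_i$ is among the smallest $n\lg^{-2}n$ block-minima: a block-minimum can fail to be among the smallest $n\lg^{-2}n$ only if at least $n\lg^{-2}n$ other blocks have strictly smaller minima, i.e.\ each such block contains an item $< r_i \le M$; that gives $\ge n\lg^{-2}n$ distinct items of $R$ that are $\le M$, at least one per block, and these are all strangers, which is consistent but I need the count to \emph{exceed} the stranger bound. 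Since each such block contributes at least one stranger and there are $q=\lg n$ items per block, actually I get $\ge n\lg^{-2}n$ strangers already from the minima alone — so to force a contradiction I should argue $r_i$ is among the smallest, say, $n\lg^{-2}n$: if more than $n\lg^{-2}n - 1$ blocks beat it, collect one stranger from each, exceeding the bound. Hence $R_i\subseteq C$, so $M\in C$. (3) \emph{Bound $|C|$.} From the $R$-side, Step~5 adds at most $n\lg^{-2}n$ blocks of $q$ items, i.e.\ at most $n\lg^{-2}n\cdot q = n\lg^{-1}n$ items; the $L$-side adds the same; so $|C|\le 2n\lg^{-1}n$. (4) \emph{The median of $C$ is $M$.} This is the crux. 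I must show that among all inputs $\le M$, all but few lie outside $C$, and symmetrically for inputs $> M$, so that $M$ sits in the middle of $C$. Every input $< M$ that is \emph{not} in $C$ is fine; I need: the number of inputs $\le M$ in $C$ equals the number of inputs $> M$ in $C$, up to the point that $M$ is the $\lceil |C|/2\rceil$-th. The count of items $\le M$ in $C$: those come from $L$-blocks added to $C$ (bounded, $\le n\lg^{-2}n\cdot q$ but most such items are $>M$...) — the clean way is: items $\le M$ total $n/2$, of which at most $n\lg^{-2}n$ are in $R$ and the rest ($\ge n/2 - n\lg^{-2}n$) are in $L$; of those in $L$, how many get into $C$? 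Symmetric argument: an item $\le M$ in $L$ enters $C$ only if its block-max is among the \emph{largest} $n\lg^{-2}n$ of the $L$ block-maxima — but a block containing an item $\le M$ can have large block-max, so this needs care. I think the right invariant is: the number of items $\le M$ not in $C$ is exactly (items $\le M$ in $L$) minus (items $\le M$ in $L\cap C$), and I bound items $\le M$ in $L\cap C$ by (number of selected $L$-blocks)$\times q \le n\lg^{-1}n$, similarly items $>M$ in $R\cap C \le n\lg^{-1}n$; then rank of $M$ in $C$ is (items $\le M$ in $C$) which lies between $|C| - 2n\lg^{-1}n$ ... this is getting delicate.

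The \textbf{main obstacle} is step (4): proving that $M$ is positioned at the \emph{median} of $C$, not merely inside $C$. The natural route is a rank-counting argument — show $\bigl|\{x\in C: x\le M\}\bigr|$ and $\bigl|\{x\in C: x> M\}\bigr|$ differ by at most $1$ — and this requires pairing the stranger bounds on $L$ and $R$ with the block-granularity loss ($q$ items per selected block) carefully, using that the two sides of the algorithm are exactly symmetric so the "lost" items $\le M$ (those in $L\setminus C$) and "lost" items $>M$ (those in $R\setminus C$) are counted by symmetric formulas. The plan is: let $a$ = number of items $\le M$ in $L$ that are \emph{not} placed in $C$, and $b$ = number of items $>M$ in $R$ not placed in $C$; show $a = b$ or $|a-b|\le 1$ by noting both equal (half the inputs, minus strangers, minus selected-block contents) and the subtracted quantities match up by the symmetry of Steps~2–4 and Step~5, so that the rank of $M$ within $C$ is $a + (\text{items}\le M\text{ in }R) - (\text{items}\le M\text{ in }R\setminus C)$, which I can pin to $\lfloor |C|/2\rfloor$ or $\lceil |C|/2\rceil$. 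I expect to need the hypothesis that $n$ is a sufficiently large power of $2$ (so $q\mid \frac{n}{2}$, $n\lg^{-2}n$ is an integer bound, etc.) to avoid rounding headaches, and to invoke only that the three auxiliary AKS networks sort \emph{exactly} (standard AKS correctness) plus Theorem~\ref{Halver}.
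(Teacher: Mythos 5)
Your steps (1)–(3) are fine and essentially follow the paper's line for showing that the median's own block survives into $C$, but the proof is not complete: step (4), which you yourself identify as the crux, is only a plan with admitted loose ends ("this is getting delicate", "I expect to need..."), and the quantities you propose to use there do not suffice. Bounding the number of items $\le M$ in $L\cap C$ by $n\lg^{-1}n$ (which is just $|L\cap C|$) cannot pin the rank of $M$ inside $C$; and your worry that "a block containing an item $\le M$ can have large block-max" is aimed in the wrong direction — items on the correct side of $M$ that get swept into $C$ are harmless. What the argument actually needs, and what the paper proves, is the complementary containment: \emph{every} stranger of $L$ (item $>M$ in $L$), and the median itself if it lies in $L$, belongs to a selected block. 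This follows by exactly the contradiction you used for $M$'s block, applied to an arbitrary stranger: if a block $L_i$ containing an item $>M$ had its maximum $l_i$ outside the top $n\lg^{-2}n$ block-maxima, then those $n\lg^{-2}n$ larger maxima together with that item would give more than $n\lg^{-2}n$ strangers in $L$, contradicting Theorem~\ref{Halver}. Hence $L\setminus C$ consists only of items $<M$, symmetrically $R\setminus C$ consists only of items $>M$, and since Steps 4 and 5 discard the same number of items from each half ($n/2-n\lg^{-1}n$ blocks' worth on each side), the number of items $<M$ in $C$ is $(n/2-1)-|L\setminus C|$ and the number of items $>M$ in $C$ is $(n/2)-|R\setminus C|$, which places $M$ exactly at the median position of $C$. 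Your proposal never establishes this "no stranger escapes $C$" fact for arbitrary strangers (only for the one block containing $M$), nor the equal-removal count, so the rank argument is left genuinely open.

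A secondary, smaller point: your "WLOG $M\in R$" is not quite a symmetry, because the paper's notion of stranger is asymmetric ($M$ itself is a stranger when in $R$ but not when in $L$); the clean statement, as in the paper, is that no element of $L\setminus C$ is a stranger \emph{or the median}, and symmetrically for $R$, which handles both locations of $M$ uniformly. With that claim in hand, the rest of your counting collapses to two lines, so the fix is a redirection of step (4) rather than new machinery.
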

\begin{proof}
Let $i$ be an index such that $l_i$ is not among the largest $n \lg^{-2} n$ of $l_1, l_2, \ldots, l_{\frac{n}{2q}}$. If an element of $L_i$ were a stranger or the median of $n$, the largest $n \lg^{-2} n$ items would be all strangers, contradicting \refth{Halver}. So no element of $L - C$ is a stranger or the median. The same is true for $R-C$ except that the median is not included.

Steps 4 and 5 remove the same number of items from the both halves, each neither a stranger nor the median.
Hence Step 6 correctly finds the median of all.
\qed
\end{proof}

We now describe how to run Algorithm 1 on the communication network ${\cal G}$.
Find our implementation below noting two remarks.

\begin{enumerate} [-]
\item Regard that there are $2n$ processor nodes in ${\cal G}$ instead of $n$, since any of them at odd and even time slots can play two different roles.
\item Every processor in ${\cal G}$ can help simulate any of the four comparator networks the way mentioned in \refsec{Def}: It can receive two numeric items from other nodes to send their minimum and/or maximum to anywhere in ${\cal G}$.
\end{enumerate}

\noindent{\bf Implementation of Algorithm 1 on ${\cal G}$: }
Simulate two AKS sorting networks in Steps 4 and 5 with extra $n$ processors. Then select the elements of $C$ in $O \lp \log n \rp$ time as follows. Initially the first node of each $L_i$ is notified if $l_i \in C$. Move items in $L \cap C$ by sending messages so that $L \cap C$ is held by the consecutive smallest numbered nodes.

To find the message destinations, construct,
right after $l_i$ are sorted in Step 4, a complete binary tree $T$ of $\frac{n}{2 q}$ nodes whose leaves are the first nodes of all $L_1, L_2, \ldots, L_{\frac{n}{2q}}$. (For simplicity identify the first node with $L_i$ itself.) Started from the leaves, recursively compute the number of $L_i \subset C$ existing in the subtree $T'$ rooted at each node of $T$. Then, started at the root of $T$, compute the number of $L_i \subset C$ existing outside $T'$ to the left: It is recursively sent from the parent of the current node. Pass it to its left child. Add the total number of $L_i \subset C$ under the left child and send the value to the right child.

This way every leaf of $T$ is informed of the number of $L_i \subset C$ to its left. The leaf disseminates the information to all the nodes in the same $L_i$. Each node is now able to compute the message destination so $L \cap C$ is held by the consecutive smallest numbered nodes.

Merge $L \cap C$ with $R \cap C$ similarly. Perform Step 6 with a simulated AKS sorting network on the obtained $C$. This completes the description of our implementation.

\medskip

Algorithm 1 correctly runs on ${\cal G}$ in $O \lp \log n \rp$  time with $O \lp \log \log n \rp$ messages. One can check it with \reflm{L3} noting that:
\begin{enumerate} [-]
\item The simulated ${\cal H}$ runs in $O \lp \log \log n \rp$ time with $O \lp n \log \log n \rp$ messages.
\item Each of the three simulated AKS sorting networks runs in $O \lp \log n \rp$ time with $O \lp n \rp$ messages.
\item It takes $O \lp n \rp$ messages each of less than $\lg n$ bits to construct $C$.
\end{enumerate}
This verifies \refth{MainClaim} for $k=\frac{n}{2}$.

Selecting an item but the median is done similarly. Assume without loss of generality that we want the $k$th smallest item such that $k < \frac{n}{2}$. We can detect it by adding extra $n-2k$ elements valued $-\infty$ to the input array. This changes no asymptotic bounds we showed so far.

We have constructed a parallel algorithm on ${\cal G}$ that distributedly computes the $k$th smallest of the $n$ inputs in $O \lp \log n \rp$ time with $O \lp n \log \log n \rp$ messages.
We now have \refth{MainClaim}.

\section{The Universal Parallel Time Lower Bound $\lg n$ on a Communication Network} \label{LB}

In this section, we show that it takes at least $\lg n$ steps to compute many basic data aggregation problems on a communication network ${\cal G}$ including the distributed selection problem.  Consider a parallel algorithm to compute a function $f(x_1, x_2, \ldots, x_n) \in \lb 0, 1 \rb$ where $x_1, x_2, \ldots, x_n$  are input numeric items of $b$-bits distributed over ${\cal G}$.
Such $f$ is said to be {\em critical everywhere} if there exist $x_1, x_2, \ldots x_n$ such that
\beeq{CriticalEverywhere}
f \lp x_1, x_2, \ldots, x_{i-1}, \hat{x}_i, x_{i+1}, \ldots, x_n \rp
\ne
f \lp x_1, x_2, \ldots, x_{i-1}, x_i, x_{i+1}, \ldots, x_n \rp,
\eeq
for each index $i=1, 2, \ldots, n$ and some $b$-bit numeric item $\hat{x}_i$.

Below we prove the time lower bound $\lg n$ to compute such $f$ as the following theorem.

\begin{theorem} \label{LowerBound}
Let $f$ be a  function on $n$ inputs critical everywhere.
It takes at least $\lg n$ parallel steps in the worst case to compute $f$ on a communication network distributedly.  \qed
\end{theorem}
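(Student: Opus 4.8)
The plan is to argue by an information-propagation / light-cone argument: after $t$ parallel steps, the output stored at the designated node can depend on the initial input $x_i$ of at most $2^t$ processors. Since $f$ is critical everywhere, the output must depend on \emph{every} $x_i$ (flipping any single coordinate from $x_i$ to $\hat x_i$ changes the value of $f$ on the witnessing input tuple). Hence we need $2^t \ge n$, i.e.\ $t \ge \lg n$, in the worst case.

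First I would make precise the notion of ``the output depends on $x_i$''. Fix the input tuple $(x_1,\dots,x_n)$ witnessing criticality. Run the algorithm on this tuple and, separately, on the tuple with the $i$-th coordinate replaced by $\hat x_i$. Since the final bit of $f$ differs between the two runs, and the final bit is read off the designated node $v^\star$ at time $t$, there must be some ``chain of influence'' from the processor that initially held $x_i$ to $v^\star$ completing by time $t$.

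Next I would formalize the light-cone bound. Define, for each processor $u$ and each time $s\le t$, the set $\mathrm{Dep}(u,s)$ of indices $i$ such that the state of $u$ after step $s$ can be affected by changing $x_i$ (with all other inputs and all nondeterministic/asynchronous choices held fixed as in the reference run). At time $0$, $\mathrm{Dep}(u,0)=\{j\}$ where $u$ initially holds $x_j$, so $|\mathrm{Dep}(u,0)|=1$. In one parallel step, $u$ can receive at most one message (from one other processor $w$), so its new state depends only on its old state and on $w$'s state at the previous step; therefore $\mathrm{Dep}(u,s+1)\subseteq \mathrm{Dep}(u,s)\cup\mathrm{Dep}(w,s)$, giving $|\mathrm{Dep}(u,s+1)|\le 2\,|\mathrm{Dep}(u,s')|_{\max}$ and hence $|\mathrm{Dep}(u,s)|\le 2^s$ by induction. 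In particular $|\mathrm{Dep}(v^\star,t)|\le 2^t$. But criticality forces $\{1,\dots,n\}\subseteq \mathrm{Dep}(v^\star,t)$: if some $i\notin \mathrm{Dep}(v^\star,t)$, then replacing $x_i$ by $\hat x_i$ would leave the state of $v^\star$ at time $t$ unchanged, so the computed value of $f$ would be unchanged, contradicting \eqref{CriticalEverywhere}. Thus $n\le 2^t$, i.e.\ $t\ge\lg n$.

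The main obstacle is handling \emph{asynchrony} and the fact that a processor's send/receive schedule may itself depend on message contents, so that the ``communication pattern'' is not fixed in advance. The fix is to argue within a single fixed execution: we compare the reference run on $(x_1,\dots,x_n)$ with the perturbed run, and we only need an upper bound on what $v^\star$'s final state \emph{can} depend on, so it suffices that in the reference run each processor performs at most one receive per step (the model allows sending one message per step, and we may assume w.l.o.g.\ one receive per step, or absorb a constant). A secondary point is that ``at time $t$ the answer is stored at $v^\star$'' must hold for all input tuples simultaneously, including the $n$ perturbed tuples; since the bound is a worst-case statement over inputs, it is enough that for the single witnessing tuple the algorithm needs $\ge\lg n$ steps, which is exactly what the light-cone argument delivers. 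Finally, I would note \refco{cor1} follows by exhibiting, for each of the three problems, an input that is critical everywhere (e.g.\ for the sum, any input works; for counting items above a threshold, put every item at or just below the threshold; for $k$-selection with items of $\ge\lc\lg n\rc+1$ bits, choose values so that moving any single item across the $k$-th order statistic changes the selected value), which is why the bit-length hypothesis on the selection problem is needed.
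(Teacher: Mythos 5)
Your proposal is correct and follows essentially the same route as the paper: a light-cone argument where the set of inputs that can influence the deciding node's state at most doubles per step (so has size at most $2^t$), while criticality everywhere forces that set to contain all $n$ indices, giving $t \ge \lg n$. The only difference is cosmetic: the paper formalizes the step ``the output at the deciding node depends only on inputs in its influence set'' via a conversion of the algorithm into a Boolean circuit, whereas you argue it directly with dependency sets, which is the same idea.
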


\noindent
By the theorem, we will have the statement mentioned in \refsec{Introduction}.

\medskip

\noindent
{\bf Corollary \ref{cor1}.}~
Any parallel algorithm takes at least $\lg n$ steps in the worst case to compute each of the following three problems on a communication network distributedly: i) the $k$-selection problem on $n$ inputs each of at least $\lc \lg n \rc + 1$ bits; ii) the problem of finding the sum of $n$ inputs; iii) the problem of counting numeric items among $n$ inputs, each exceeding a given threshold.
\begin{proof}
i): Given such an algorithm on $n$ inputs $x_1, x_2, \ldots, x_n$, let $f(x_1, x_2, \ldots, x_n)$ be the least significant bit of the $k$th smallest input. Such a function $f$ is critical everywhere: Choose $x_i= 2(i-1)$ for $i=1, 2, \ldots, n$. If $k>1$, let $\hat{x}_i = x_k - 1$, otherwise $\hat{x}_i = 1$.
(All of these numbers are $\lc \lg n \rc + 1$-bit integers.)
These satisfy \refeq{CriticalEverywhere}.  By the theorem, the algorithm takes at least $\lg n$ steps to compute such $f$ on a communication network distributedly.

ii), iii): Shown similarly to i) by choosing $f$ as the least significant bit of the sum of $n$ inputs, and that of the number of inputs exceeding a given threshold, respectively.
\qed
\end{proof}

Consider the class of problems to compute on a communication network distributedly such that any particular bit of the computed result is a function critical everywhere. It is very large containing many aggregation problems such as the above three, each with the parallel running time lower bound $\lg n$ on ${\cal G}$.

\medskip

We prove the theorem.  Denote by $v_i$  the $i$th processor node of ${\cal G}$, and by $x_i$ the numeric item held by $v_i$ at time 0.  Also let $A$ be a parallel algorithm to compute $f$ on ${\cal G}$ distributedly.  Since $f$ is critical everywhere, there exists an input set $X= \lb x_1, x_2, \ldots, x_n \rb$ such that \refeq{CriticalEverywhere}.
Assume $f(x_1, x_2, \ldots, x_n)=1$ without loss of generality, written as $f(X)=1$.

We focus on the computation performed by $A$ on the input set $X$. For time $t \ge 0$ and index $i=1, 2, \ldots, n$, define the set $V_{i, t}$ of nodes in ${\cal G}$ recursively as follows.
\bdash
\item $V_{i, 0} = \lb v_i \rb$ at time $t=0$.
\item $V_{i, t}=V_{i, t-1} \cup V_{j, t-1}$ if  $v_j$ sends a message to $v_i$ at time $t \ge 1$. Otherwise $V_{i, t}=V_{i, t-1}$.
\edash
$V_{i, t}$ is the set of processors $v_j$ such that the values of $x_j$ can possibly affect the computational result at $v_i$ at time $t$.

Let $t_{max}$ be the time when $A$ terminates on $X$, and $v_i$ be the processor that decides $f(X)=1$  at time $t_{max}$.  The size of $V_{i, t_{max}}$ must be $n$; otherwise $A$ decides $f(X)=1$ at $v_i$ with no information on some input $x_j$, contradicting that $f$ is critical everywhere. The following lemma formally confirms it.

\begin{lemma} \label{L2}
$|V_{i, t_{max}}|=n$.
\end{lemma}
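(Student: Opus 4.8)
The plan is to argue by contradiction: suppose $|V_{i,t_{max}}| < n$, so there is an index $j$ with $v_j \notin V_{i,t_{max}}$. The intuition is that the value $x_j$ never reaches $v_i$ in any causal sense by time $t_{max}$, so $v_i$ would produce the same output bit whether $v_j$ holds $x_j$ or the alternative item $\hat{x}_j$ guaranteed by \refeq{CriticalEverywhere}. But that contradicts the fact that flipping coordinate $j$ from $x_j$ to $\hat{x}_j$ flips $f$.

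To make this precise, I would first prove a \emph{locality invariant} by induction on $t$: for every node $v_\ell$ and every time $t$, the entire state of $v_\ell$ at time $t$ during the execution on $X$ depends only on $\{x_m : v_m \in V_{\ell,t}\}$; equivalently, if $X'$ is another input set agreeing with $X$ on all indices $m$ with $v_m \in V_{\ell,t}$, then the execution of $A$ on $X'$ has $v_\ell$ in exactly the same state (including its message-sending decisions at step $t$) as the execution on $X$. The base case $t=0$ is immediate since $v_\ell$'s initial state is just $x_\ell$ and $V_{\ell,0}=\{v_\ell\}$. For the inductive step, $v_\ell$'s state at time $t$ is a function of its state at time $t-1$ together with any message received at time $t$; by the recursive definition of $V_{\ell,t}$, if $v_p$ sends $v_\ell$ a message at time $t$ then $V_{p,t-1} \subseteq V_{\ell,t}$ and $V_{\ell,t-1} \subseteq V_{\ell,t}$, so by the induction hypothesis both the prior state of $v_\ell$ and the content of the incoming message are unchanged when we pass from $X$ to $X'$. (One should note that whether $v_p$ \emph{chooses} to send a message to $v_\ell$ at time $t$ is itself part of $v_p$'s state at step $t$, which the induction hypothesis covers, so the event "a message arrives" is itself determined by the restricted input; this handles the asynchrony cleanly.)

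Given the invariant, the conclusion is quick. Take $j$ with $v_j \notin V_{i,t_{max}}$ and set $X' = (x_1,\ldots,x_{j-1},\hat{x}_j,x_{j+1},\ldots,x_n)$, where $\hat{x}_j$ is the item from \refeq{CriticalEverywhere}. Then $X$ and $X'$ agree on every index $m$ with $v_m \in V_{i,t_{max}}$, so by the invariant the execution of $A$ on $X'$ leaves $v_i$ in the same state at time $t_{max}$ as on $X$; in particular $A$ outputs $f(X)=1$ on input $X'$ as well. But \refeq{CriticalEverywhere} says $f(X') \neq f(X)$, i.e. $f(X')=0$, contradicting the correctness of $A$. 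Hence $|V_{i,t_{max}}| = n$.

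The main obstacle, and the only subtle point, is making the locality invariant airtight in the asynchronous message-passing model: one must be careful that "the set of messages $v_\ell$ receives at step $t$ and from whom" is fully determined by the restricted inputs, rather than being an external nondeterministic choice. The resolution is that in the model as formulated, a processor's decision of whether and to whom to send at each step is a function of its own state, so it is subsumed by the induction hypothesis applied to the senders; once this is spelled out, the rest is routine. A secondary minor point is ensuring $V_{i,t_{max}}$ being the \emph{causal} cone (not, say, a larger "anyone could have sent" cone) is exactly the right set — but the recursive definition in the excerpt already fixes it to the realized execution on $X$, which is what the invariant needs.
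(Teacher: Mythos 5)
Your argument is correct and is essentially the paper's own: the paper establishes the same locality fact---that the bits held by $v_i$ at time $t_{max}$ are determined solely by the inputs $x_j$ with $v_j \in V_{i, t_{max}}$---by invoking the algorithm-to-Boolean-circuit equivalence and inductively building a sub-circuit $C'$ over the realized execution on $X$, then deriving the same contradiction with \refeq{CriticalEverywhere}. Your direct induction on processor states merely replaces that circuit packaging, and the one residual subtlety you flag (ruling out that the modified input makes some processor outside the cone send $v_\ell$ a message it did not send under $X$) is left just as implicit in the paper's circuit construction as in your invariant.
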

\begin{proof}
By the equivalence between an algorithm and Boolean circuit \cite{papa}, there exists a Boolean circuit $C$ to compute $f(X)$,  which is converted from $A$ by the reduction algorithm. Construct the subgraph of $C$ that decides $f(X)=1$ at $v_i$ as follows: For each $t=0, 1, \ldots, t_{max}$, with the algorithm running on $X$ at each processor $v_j$ and the exchanged messages,
inductively construct a Boolean circuit to decide each bit of the computed result stored at $v_j$ at time $t$. For the time $t=t_{max}$ and processor $v_i$, we have a circuit $C'$ that decides $f(X)=1$ only from $x_j$ such that $v_j \in V_{i, t_{max}}$.

The existence of $C'$ means $f(X)=1$ if
all $x_j$ such that $v_j \in V_{i, t_{max}}$ have the values specified by $X$. If $|V_{i, t_{max}}|<n$, this contradicts that $f$ is critical everywhere satisfying \refeq{CriticalEverywhere}. Hence $|V_{i, t_{max}}|=n$.
\qed
\end{proof}

We also have $|V_{j, t}| \le 2^t$ for each $t$ and $j$. It is because $
|V_{j, t}| = \left| V_{j, t-1} \cup  V_{l, t-1} \right|
\le  2 \cdot 2^{t-1}  = 2^t
$ if a processor $v_l$ sends a message to $v_j$ at time $t \ge 1$.

Threfore,
$
\lg n = \lg |V_{i, t_{max}}| \le t_{max}
$.
\refth{LowerBound} follows.

\section{Concluding Remarks and Open Problems}

The depth bound $1296 cd_{\frac{1}{5}} c \lg \lg n$ we showed in \refth{Halver} could be further improved by devising different gadgets using $\ve$-halvers, or different expander graphs. It is interesting to find its limit. It is also possible to improve the aforementioned estimate in \cite{Seiferas} on the constant of the $O \lp \log n \rp$ depth of a sorting network. In addition, it is good to ask if $\Omega \lp \log \log n \rp$ messages are necessary to compute the the distributed selection problem in $O \lp \log n \rp$ time on a communication network.

\section*{Appendix: Explicit Construction of a $1/5$-Halver}

An {\em $\lp m, q, d \rp$-expander graph} is a bipartite graph $(V_1, V_2, E)$ such that $|V_1|=|V_2|=m$, the maximum degree is at most $q$, and
\[
|\Gamma \lp U \rp| \ge  \lp 1+ d \lp 1- \frac{|U|}{m}  \rp\rp |U|,
\]
for every $U \subseteq V_1$. Gabber and Galil demonstrated \cite{Galil} how to explicitly construct an $\lp \hat{m}^2, 7, \frac{2- \sqrt 3}{2} \rp$-expander graph for every $\hat{m} \in \Z^+$. Below we build a $\frac{1}{5}$-halver from such an expander graph.

For any bipartite graph $S=(V_1, V_2, E)$ such that $|V_1|=|V_2|$, denote by $G= f(S)$ the digraph $(V_1, E')$ such that $(v_i, w_j) \in E$ iff $(v_i, v_j) \in E'$ where $v_i$ is the $i$th vertex of $V_1$ and $w_j$ the $j$th of $V_2$.
Also denote $S = f^{-1}(G)$. Let $G^h$ ($h \in \Z^+$) be the digraph $(V_1, E'')$ such that $(v, w) \in E''$ iff there exists a path of length $h$ from $v$ to $w$ in $G$.

Let $S$ be the bipartite graph from the Gabber-Galil construction, and $G=f(S)$. Construct
\[
S' = f^{-1} \lp G^h \rp, \eqwhere h = 21.
\]

We show that the bipartite graph $S'$ satisfies \refeq{Expander} for $\ve=\frac{1}{5}$ and every subset $U$ of $V_1$. It suffices to verify $|\Gamma \lp U \rp| \ge c(1-\ve)m$ for a vertex set $U$ of $G$ such that
$
|U| \ge c \ve m
$, and any $c \in (0, 1]$.
It is further sufficient to assume $c=1$ in the following argument. Let $U_i$ be the neighbor set of $U$ in $G^{i}$ such that $|U|=\ve m$. With the recurrence
\[
\frac{|U_{i+1}|}{m} \ge \lp 1 + \frac{2 - \sqrt 3}{2} \lp 1- \frac{|U_i|}{m} \rp \rp  \frac{|U_i|}{m},
\]
we calculate
$
\frac{|U_h|}{m} > 0.806>1-\ve$ for $h=21$ and $\ve = \frac{1}{5}$. This confirms that a maximum degree at most $7^h$ is sufficient to achieve \refeq{Expander}.

Switch $V_1$ and $V_2$ and construct $S'' = f^{-1} (f(S)^h)$. Our $\frac{1}{5}$-halver is the undirected bipartite graph $(V_1, V_2, E)$ such that $E$ consists all the edges in $S'$ and $S''$. It has a maximum degree at most $2 \cdot 7^{21}$.

For building the comparator network ${\cal H}$ of \refth{Halver},
we must consider a case when $m$ is a sufficient large power of 6 but not a square. If so, construct an $\ve$-halver from the $\lp \lf \sqrt m  \rf^2, 7, \frac{2-\sqrt 3}{2} \rp$-expander graph.
It is straightforward to check that $h=21$ is still sufficient to achieve $\ve = \frac{1}{5}$.


\begin{thebibliography}{20}



\bibitem{KLW07} Kuhn, F.,  Locher, T., Wattenhofer, R.: Tight bounds for distributed selection. In: Proc.\ of the 19th Annual {ACM} Symposium on Parallelism in Algorithms and Architectures (SPAA), pp.\ 145-153. ACM Press (2007)

\bibitem{KDG03} Kempe, D., Dobra, A., Gehrke, J.: Gossip-based computation of aggregate information. In: Proc.\ of the 44th Annual IEEE Symposium on Foundations of Computer Science (FOCS), pp.\ 482--491. IEEE Press (2003)




\bibitem{SSS92} Santoro, N., Sidney, J.B., Sidney S.J.: A distributed selection algorithm and its expected communication complexity. Theoretical Computer Science, vol.\ 100, pp.\ 185--204. Elsevier (1992)



\bibitem{F83} Frederickson, G.N.: Tradeoffs for Selection in distributed networks. In: Proc.\ of the 2nd Annual ACM
Symposium on Principles of Distributed Computing
(PODC), pp.\ 154--160. ACM Press (1983)






\bibitem{Textbook2} Pardalos, P.,  Rajasekaran, S., Reif, J., Rolim, J.: Handbook on randomized computing. Kluwer Academic Publishers (2001)


\bibitem{CHR93} Chaudhuri, S., Hagerup, T., Raman, R.: Approximate and exact deterministic parallel selection. LNCS, vol.\ 711,  pp.\ 352--361. Springer (1993)


\bibitem {AKSS89} Ajtai, M.,  Koml{\'o}s, J.,  Steiger, W.L., Szemer{\'e}di, E.: Optimal parallel selection has complexity $O \lp \log \log n \rp$. Journal of Computer and System Sciences, vol.\ 38, pp.\ 125--133. Academic Press (1989)




\bibitem{FJ82} Frederickson, G.N., Johnson, D.B.: The complexity  of selection and ranking  in X + Y and matrices with sorted columns,  J. Comput. System Sci., vol.\ 24 (2), pp.\ 197--208 Elsevier (1982).


\bibitem{SS89} Santoro, N.,  Suen, E.: Reduction techniques for selection in distributed files. IEEE Transactions on Computers, vol.\ 38, pp.\ 891--896. IEEE Press (1989)


\bibitem{AKS} Ajtai, M., Koml{\'o}s, J., Szemer{\'e}di, E.: Sorting in $c \log n$ parallel steps. Combinatorica, vol.\ 3, pp.\ 1--19. Springer (1983)

\bibitem{Paterson} Paterson, M.: Improved sorting networks with {O}(log {N}) depth. Algorithimica, vol.\ 5,  pp.\ 65--92. Springer (1990)


\bibitem{Seiferas} Seiferas, J. L.: Sorting networks of logarithmic depth, further simplified. Algorithmica, vol.\ 53,  pp.\ 374--384. Springer (2007)

\bibitem {Textbook} Baddar, S. W. A., Batcher, K. E.: Designing sorting networks. Springer (2011)



\bibitem{Goodrich} Goodrich, M.T.: Zig-zag sort: a simple deterministic data-oblivious sorting algorithm running in {O}(n log n) time. In: Proc.\ of Symposium on Theory of Computing  (STOC), pp.\ 684--693. ACM Press (2014)


\bibitem{Galil} Gabber, O., Galil, Z.: Explicit Construction of Linear-Sized Superconcentrators. J. Computer and System Sciences, vol.\ 22, pp.\ 407--420. Academic Press (1981)

\bibitem{LPS86} Lubotzky, A., Phillips, R., Sarnak, P.: Explicit Expanders and the Ramanujan Conjectures. In: Proc.\ of Symposium on Theory of Computing (STOC), pp.\ 240--246. ACM Press (1986)



\bibitem{papa} Papadimitriou, C. H.: Computational Complexity. Addison-Wesley (1994)











\end{thebibliography}
\end{document}